%% file: iterEqF_arxiv.tex
\documentclass{article}
\usepackage{arxiv}

\input{preamble}

\newtheorem{theorem}{Theorem}[section]

\usepackage[backend=bibtex,bibstyle=ieee,citestyle=numeric-comp,doi=false,isbn=false]{biblatex} 
\title{The Iterative Equivariant Filter}
\headertitle{The Iterative Equivariant Filter}

\author{
\href{https://orcid.org/0000-0003-4391-7014}{\includegraphics[scale=0.06]{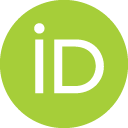}\hspace{1mm}
Pieter van Goor}
\\
    School of Aerospace, Mechanical, and Mechatronic Engineering \\
    The University of Sydney \\
    NSW 2006, Australia \\
    \texttt{pieter.vangoor@sydney.edu.au} \\
	\And	\href{https://orcid.org/0000-0002-1987-9268}{\includegraphics[scale=0.06]{orcid.png}\hspace{1mm}
    James Richard Forbes}
\\
    Dept.~of Mechanical Engineering \\
	McGill University \\
    Montreal, QC, Canada \\
	\texttt{james.richard.forbes@mcgill.ca} \\
}

\newcommand{\mmap}{\Theta}

\newcommand{\publicationdetails}
{Under review}

\newcommand{\publicationversion}
{Preprint}

\begin{document}




\maketitle

\begin{abstract}
This paper presents the iterative equivariant filter (IterEqF). 
The iterative extended Kalman filter (IterEKF) replaces the standard EKF correction step with an iterative correction step that is the Gauss-Newton solution to a nonlinear weighted least squares problem.
The standard equivariant filter (EqF) exploits and respects the underlying symmetry of state estimation problems posed on homogeneous spaces and Lie groups. 
The motivation behind the IterEqF is to combine the features of both the IterEKF and the EqF, thus leading to a high-performance state estimation solution that is well suited to navigation problems. 
This paper derives the iteration procedure for the IterEqF update step and shows how the intrinsic nonlinearity of the approach naturally results in a `reset' of the filter's covariance into new coordinates.
Monte-Carlo simulations of range-based localisation for a mobile robot demonstrate the improvement in performance relative to a standard EqF, especially during the transient convergence.
\end{abstract}

\section{Introduction}

Mobile robotic systems such as uncrewed ground, underwater, and aerial vehicles, move in three-dimensional space as they execute tasks including material handling, inspection of infrastructure, mapping, and other missions. 
To execute these tasks reliably, the robot must be able to reliably estimate states of interest, such as position, velocity, and attitude.

The state of a robotic system is often naturally represented by a Lie group element.
For example, $\SO(3)$ elements represent attitude, $\SE(3)$ elements represent position and attitude, and $\SE_2(3)$ elements represent position, velocity, and attitude.
Recently, there has been a focused effort on deterministic observer design for mobile robotic systems, where the observer state is an element of $\SO(3)$ \cite{paper_Kinsey_Whitcomb_TRO_2007,paper_mahony_2008}, $\SE(3)$ \cite{paper_Hua_etal_CDC_ECC_2011,paper_Hua_etal_CDC_2015,paper_Zlotnik_Forbes_TAC_2019}, $\SE_2(3)$ \cite{paper_Berkane_arXiv_2026}, $\SE_5(3)$ \cite{paper_Benahmed_etal_CDC_2025}, or $\SE_{3+n}(3)$ \cite{paper_Boughellaba_etal_arXiv_2026}.
Observers solving the simultaneous localization and mapping (SLAM) problem have also been considered \cite{paper_Mahony_Hamel_CDC_2017,paper_Zlotnik_Forbes_TAC_2018}, again exploiting the geometry of the SLAM problem. 
In parallel, the application of Lie group tools within the extended Kalman filter (EKF) has also been an active area of research.
The aerospace community has long promoted quaternion-based variants of the EKF, such as the multiplicative EKF (MEKF) \cite{paper_Lefferts_Markley_Shuster_1982,paper_Markley_JGCD_2003} and norm-constrained EKF \cite{paper_Zanetti_Majji_Bishop_Morari_2009,paper_Forbes_de_Ruiter_norm_cont_2013}.
Within the robotics community, MEKFs that work directly with Lie group representations of the state, such as $\SE(3)$ \cite[\S~10]{paper_Sola_eta_al_micro}, have become common due to their well-documented performance advantages over EKFs based on Euclidean coordinates.

The invariant EKF (InEKF) is a particular type of MEKF that, when interpreted as an observer, possesses desirable convergence properties \cite{Barrau2017,Barrau2018}.
The advantage of the InEKF is that, when interpreted as an observer, the Jacobians associated with linearization are independent of the state estimate, and the resulting domain of convergence is independent of the system state.
However, these results rely on a group affine process model and right- or left-invariant measurements.
The equivariant filter (EqF) is a different extension of the MEKF applicable to systems on homogeneous spaces with a Lie group state symmetry \cite{van_Goor_etal_CDC_2020,moahny_etal_ARC_2022}.
The keys to applying the EqF are defining a symmetry group, a transitive group action, and a lift.
These enable the filter's mean to evolve on the Lie group while its covariance equations can be posed on the original system state space.
In the context of inertial navigation system (INS) symmetries, the EqF framework can recover the classic MEKF, InEKF, ``imperfect" InEKF, two-frame group IEKF, and others, depending on the choice of symmetry group \cite{paper_Fornasier_etal_Automatica_2025}.
A well-chosen symmetry can significantly enhance EqF performance compared to classic and standard EKFs \cite{paper_Fornasier_etal_Automatica_2025}.

EKF variants are generally executed in a predict-correct sequence, where the process model is used to predict the state, and the measurement model is used to correct the state.
The iterative EKF (IterEKF) is yet another variant of the EKF that maintains the standard prediction step of the EKF, but modifies the correction step by iteratively updating the corrected state until a convergence criterion is met, leading to enhanced filter performance and robustness \cite{paper_Bell_Cathey_1993,paper_Bourmaud_iteritive_LG_IKF,paper_Liu_Chen_Zhang_CDC_2023}.
The IterEKF correction step can be derived by posing and solving a nonlinear weighted least-squares problem using the Gauss-Newton algorithm  \cite{paper_Bell_Cathey_1993}.
Iteration of the correction step has been considered in the context of a matrix Lie group EKF in \cite{paper_Bourmaud_iteritive_LG_IKF,paper_Liu_Chen_Zhang_CDC_2023} and an InEKF in \cite{Goffin_etal_2026}.

This paper presents the iterative EqF (IterEqF). 
The motivation is to incorporate iteration into the EqF framework, thus combining the advantages of symmetry-based EKFs with the improved transient performance and robustness of the iterative update step. 
The correction step of the IterEqF is the Gauss-Newton solution of a nonlinear weighted least-squares problem weighing information from the prediction step and the measurement, lifted from the homogeneous state space to the chosen symmetry group. 
Special care is taken to define the iteration on the Lie group while accounting for degenerate directions associated with the lifting, and it is shown that the ``reset step" \cite{paper_Markley_JGCD_2003,paper_Mueller_etal_JGCD_2017,paper_Ge_etal_CDC_2022,Paper_Ge_van_Goor_Mahony_LCSS_2024,ge_etal_CEP_2026} is a natural consequence of the IterEqF correction.
This is an important extension of prior work \cite{paper_Bourmaud_iteritive_LG_IKF,paper_Liu_Chen_Zhang_CDC_2023} that showed the reset step appears for iterative EKFs posed directly on Lie groups, and contrasts with \cite{Lu_etal_2025} where the reset step is applied after iteration as an ad-hoc addition similarly to the standard EqF \cite{paper_Ge_etal_CDC_2022}.
Monte-Carlo simulations of a range-based localization problem demonstrate that the IterEqF has superior transient performance, in both the error and covariance, relative to the standard EqF.

The remainder of this paper is as follows. Section~\ref{sec:prelim} provides the requisite mathematical foundation relevant to equivariance. The standard EqF is reviewed in Section~\ref{sec:EqF_review}. The main contribution, the IterEqF, is derived in Section~\ref{sec:iterEqf}. Simulation results are presented in Section~\ref{sec:example}. Conclusions are drawn in Section~\ref{sec:conclusion}

\section{Preliminaries}
\label{sec:prelim}

For an introduction to smooth manifolds and Lie groups, the authors recommend \cite{2012_lee_IntroductionSmoothManifolds}.

Consider a smooth ($C^\infty$) manifold $\calM$.
The tangent space at $\xi \in \calM$ is denoted $\T_\xi \calM$, and the tangent bundle is denoted $\T \calM$.
For a smooth map $h : \calM \to \calN$, where $\calN$ is another manifold, the differential of $h$ at a point $\xi \in \calM$ is written
\begin{align*}
    \diff h(\xi) &: \T_\xi\calM \to \T_{h(\xi)}\calN.
\end{align*}
The base point may also be omitted, in which case we regard $\diff h : \T \calM \to \T\calN$ as a map between the tangent bundles.
For a map $h : \calM_1 \times \calM_2 \to \calN$, we write the partial maps
\begin{gather*}
    h_\xi : \calM_2 \to \calN, \qquad h^\zeta : \calM_1 \to \calN, \\
    h_\xi(\zeta) = h(\xi,\zeta) = h^\zeta(\xi),
\end{gather*}
for all $\xi \in \calM_1$ and $\zeta \in \calM_2$.

A Lie group $\grpG$ is a smooth manifold equipped with a smooth group structure.
We write the identity as $I \in \grpG$.
The product of $X,Y \in \grpG$ is written as $XY$, and the inverse of $X$ is written as $X^{-1}$.
The Lie algebra of $\grpG$ is written as $\gothg$ and may be identified with the tangent space of $\grpG$ at the identity.
The exponential $\exp : \gothg \to \grpG$ is a smooth map from the Lie algebra to the Lie group, and it is locally invertible.
Over the neighbourhood $\calU \subset \grpG$ of the identity where $\exp$ is invertible, its inverse is written $\log : \calU \to \gothg$.
The group and algebra adjoint actions are written as $\Ad : \grpG \times \gothg \to \gothg$ and $\ad : \gothg \times \gothg \to \gothg$, respectively.
The wedge and vee maps are linear isomorphisms $\cdot^\wedge : \R^{\dim \gothg} \to \gothg$ and $\cdot^\vee : \gothg \to \R^{\dim \gothg}$ obtained by selecting a basis for $\gothg$.

A (right\footnote{We consider only right group actions in this paper.}) group action is a smooth map $\phi : \grpG \times \calM \to \calM$, where $\grpG$ is a Lie group and $\calM$ a smooth manifold, satisfying
\begin{align}
    \phi(I, \xi) &= \xi, \\
    \phi(X_2, \phi(X_1, \xi)) &= \phi(X_1 X_2, \xi),
\end{align}
for all $X_1, X_2 \in \grpG$ and $\xi \in \calM$.
If the partial map $\phi^\xi : \grpG \to \calM$ is injective for all $\xi \in \calM$, then we say that $\phi$ is \emph{free}.
If the partial map $\phi^\xi : \grpG \to \calM$ is surjective for some $\xi \in \calM$, then it is necessarily surjective for all $\xi \in \calM$, and we say that $\phi$ is \emph{transitive}, and we say that $\calM$ is a \emph{homogeneous space}.

The space of $m\times m$ symmetric positive definite matrices is written as $\Sym_+(m) \subset \R^{m\times m}$.
The normal distribution with mean $\mu \in \R^m$ and covariance $\Sigma \in \Sym_+(m)$ is written as $N(\mu, \Sigma)$.

\section{Equivariant Filtering}
\label{sec:EqF_review}

This section reviews the Equivariant Filter design methodology for systems with continuous dynamics and discrete measurements.
Consider the following system evolving on a state space manifold $\calM$ of dimension $m$, with inputs in $\R^l$ and outputs in $\R^n$,
\begin{subequations}
\begin{align}
    \dot{\xi} &= f_u(\xi), \label{eq:system_dynamics} \\
    y &= h(\xi) \label{eq:output_y}.
\end{align}
\end{subequations}
Here, $\xi \in \calM$ is the state, $u \in \R^l$ is the input, $f : \R^l \to \mathfrak{X}(\calM)$ is the system function, $y$ is the output, and $h : \calM \to \R^n$ is the measurement function.
The system dynamics \eqref{eq:system_dynamics} are considered to flow continuously for all time, while the measurements $y$ are considered to arrive only at a discrete sequence of times labelled $t_1, t_2, \ldots \in \R$ satisfying $t_{k+1} > t_k$ for all $k \in \N$.

Equivariant filter design requires a choice of a transitive right group action $\phi: \grpG \times \calM \to \calM$.
This selection then defines the relationship between the group $\grpG$ and the underlying state space $\calM$.
The next step is to choose a map $\Lambda : \calM \times \R^l \to \mathfrak{g}$, satisfying
\begin{align}\label{eq:lift_condition}
    \diff \phi^\xi (I) \Lambda(\xi, u) = f_u(\xi),
\end{align}
for all $\xi \in \calM$ and all $u \in \R^l$.
This is termed the \emph{lift} and defines how trajectories of the system are replicated on the group.
The lift condition \eqref{eq:lift_condition} constrains $\Lambda(\xi,u)$ except for the values it takes along the Lie algebra of the stabiliser of $\phi$ at $\xi$, meaning that any two lifts $\Lambda^1$ and $\Lambda^2$ must satisfy
\begin{align*}
    \Lambda^1(\xi, u) - \Lambda^2(\xi,u) \in \stab_\phi(\xi),
\end{align*}
for all $\xi \in \calM$ and $u \in \R^l$.
In particular, if $\phi$ is a free group action, then $\stab_\phi(\xi) = \{0\}$ and the lift $\Lambda$ is uniquely determined by \eqref{eq:lift_condition}.

Choose a fixed \emph{origin} $\xi_0 \in \calM$ and define the observer state $\hat{X} \in \grpG$ to have dynamics
\begin{align}
    \dot{\hat{X}} &= \hat{X} \Lambda(\phi(\hat{X}, \xi_0), u_m), \\
    \hat{X}(t_k^+) &= \exp(\Delta_k) \hat{X}(t_k^-),
\end{align}
where $u_m \in \R^l$ is the measured input, $\Delta_k \in \mathfrak{g}$ is a correction term that remains to be designed, and $t_k^-$ and $t_k^+$ represent the times instantaneously before and after receiving a measurement at time $t_k$.
Then the state estimate is given by $\hat{\xi} = \phi(\hat{X}, \xi_0)$, and satisfies
\begin{align*}
    \dot{\hat{\xi}} &= f_u(\hat{\xi}), \\
    \hat{\xi}(t_k^+) &= \phi^{\hat{\xi}}(\exp(\Ad_{\hat{X}^{-1}} \Delta_k)).
\end{align*}
If $\Delta_k = 0$, then the dynamics of $\hat{\xi}$ are exactly those of the original system \eqref{eq:system_dynamics}.

The \emph{equivariant error} is defined as
\begin{align}
    e = \phi(\hat{X}^{-1}, \xi) \in \calM.
\end{align}
Choose $\vartheta : U_0 \subseteq \calM \to \R^m$ to be a chart of $\calM$ in a neighbourhood $U_0$ of $\xi_0$, then the local error coordinates are given by $\varepsilon = \vartheta(e)$.
If there is a map $\mmap : \R^{m} \to \gothg$ such that $\vartheta^{-1}(\varepsilon) = \phi(\exp(\mmap \varepsilon), \xi_0)$ and $\gothg = \im \mmap \oplus \stab_\phi(\xi_0)$, then we say that $\vartheta$ is a \emph{normal} chart.
In the case where $\calM$ is isomorphic to $\grpG$ (the group action is free), then the normal charts are exactly the charts given through the Lie group logarithm, differing only by a choice of Lie algebra basis.
We will assume that $\vartheta$ is a normal chart.

The Equivariant filter directly estimates the distribution of the local error coordinates $\varepsilon$ as a zero-mean Gaussian with covariance $\Sigma \in \Sym_+(m)$.
The continuous dynamics of $e$ are directly seen to be
\begin{align*}
    \dot{e}
    &= \ddt \phi(\hat{X}^{-1}, \xi) \\
    &= \diff \phi_{\hat{X}^{-1}}(\xi)[ f_u(\xi) ] 
    - \diff \phi^{\xi} (\hat{X}^{-1})[\Lambda(\phi(\hat{X}, \xi_0), u_m)\hat{X}^{-1}] \\
    &= \diff \phi_{\hat{X}^{-1}}(\xi) \diff \phi^\xi (I) \Lambda(\xi, u)
    \\ &\hspace{1cm}
    - \diff \phi^{\phi(\hat{X}^{-1}, \xi)} (I)[\hat{X}\Lambda(\phi(\hat{X}, \xi_0), u_m)\hat{X}^{-1}] \\
    &= \diff \phi^{\phi(\hat{X}^{-1}, \xi)} (I) \Ad_{\hat{X}} \Lambda(\xi, u)
    - \diff \phi^{e} (I) \Ad_{\hat{X}}\Lambda(\hat{\xi}, u_m) \\
    &= \diff \phi^{e} (I) \Ad_{\hat{X}} [\Lambda(\phi_{\hat{X}}(e), u) - \Lambda(\hat{\xi}, u_m)].
\end{align*}
The dynamics of $\varepsilon$ are then simply the dynamics of $e$ expressed in local coordinates and linearised about $\varepsilon \approx 0$ and $u \approx u_m$.
Specifically,
\begin{subequations}
\begin{align}
    \dot{\varepsilon}
    &= A_t \varepsilon + B_t (u-u_m), \label{eq:error_dynamics_lineariased} \\
    A_t &= \diff \vartheta \diff \phi^{\xi_0}(I) \Ad_{\hat{X}} \diff \Lambda^{u_m}(\hat{\xi}) \diff \phi_{\hat{X}} \diff \vartheta^{-1}, \label{eq:A_matrix} \\
    B_t &= \diff \vartheta \diff \phi^{\xi_0}(I) \Ad_{\hat{X}} \diff \Lambda_{\hat{\xi}}(u_m). \label{eq:B_matrix}
\end{align}
\end{subequations}
Based on this linearisation, the covariance $\Sigma$ of the EqF is assigned the dynamics
\begin{align*}
    \dot{\Sigma} = A_t \Sigma + \Sigma A_t^\top + B_t Q_t B_t^\top,
\end{align*}
where $Q_t \in \Sym_+(l)$ represents the noise associated with the measurement of $u_m$, typically $u-u_m \sim N(0, Q_t)$.

\section{Iterative Updates for Equivariant Filters}
\label{sec:iterEqf}

The measurement update for discrete time measurements in an EqF was first described in \cite{paper_Ge_etal_CDC_2022}, where a single step is computed to update the filter's estimated state and covariance.
This paper instead takes an iterative approach to the update step, framing it as an optimization procedure that aims to find the optimal state estimate and covariance, given all the information available at the time.
Upon receiving a new measurement, the EqF possesses two independent pieces of information about the state $\xi$.
First, according to the filter's own state and covariance, 
\begin{align*}
    \vartheta(\phi(\check{X}^{-1}, \xi)) \sim N(0, \check{\Sigma}),
\end{align*}
where $\check{X} = \hat{X}(t_k^-) \in \grpG$ and $\check{\Sigma} = \Sigma(t_k^-) \in \Sym_+(m)$ capture the filter's information state immediately \emph{prior} to the receipt of a new measurement.
Second, according to the new measurement,
\begin{align*}
    y - h(\xi) \sim N(0, R),
\end{align*}
where $R \in \Sym_+(n)$ is the covariance of the noise associated with the measurement process.
These two items can be combined into a cost function $S : \calM \to \R^+$ for the unknown state $\xi$, defined by
\begin{align}
    S(\xi)
    &:= \frac{1}{2} r(\xi)^\top \mathrm{diag}(\check{\Sigma}^{-1}, R^{-1}) r(\xi), \\
    r(\xi)
    &:= \begin{pmatrix}
        \vartheta(\phi(\check{X}^{-1}, \xi)) \\
        y - h(\xi)
    \end{pmatrix},
\end{align}
where $r : \calM \to \R^{m+n}$ is termed the \emph{residual}.
Unlike a classical, extended, or invariant Kalman filter, the EqF state belongs to the group $\grpG$ rather than the state space $\calM$.
This means that, to optimise over a group element, the cost function and residual definitions have to be lifted to $\tilde{S} : \grpG \to \R^+$ and $\tilde{r} : \grpG \to \R^{m+n}$, defined by
\begin{align}
    \tilde{S}(X) &:=
    S(\phi(X, \xi_0)) \notag\\
    &= \frac{1}{2} \tilde{r}(X)^\top \mathrm{diag}(\check{\Sigma}^{-1}, R^{-1}) \tilde{r}(X), \label{eq:lifted_cost}\\
    \tilde{r}(X) &:= r(\phi(X, \xi_0)) \notag\\
    &= \begin{pmatrix}
        \vartheta(\phi(X \check{X}^{-1}, \xi_0)) \\
        y - h(\phi(X,\xi_0))
    \end{pmatrix}.
\end{align}
The following theorem details how to compute each iteration of a Gauss-Newton algorithm to estimate $\hat{X}$ that optimises $\tilde{S}$.
Upon receiving a measurement, the IterEqF computes the Gauss-Newton iterations described below until convergence or until a maximum number of iterations is reached.

\begin{theorem}\label{thm:iterations}
    Let $\hat{X}_0 = \check{X} \in \grpG$ and $\hat{\Sigma}_0 = \check{\Sigma} \in \Sym_+(m)$ be initial estimates of the optimum and inverse Hessian of the cost function $\tilde{S}$ as defined in \eqref{eq:lifted_cost}.
    Then the Gauss-Newton iterations are defined for each $j=0,1,...$ by
    \begin{subequations}    
    \begin{align}
        \varepsilon_j &= \vartheta(\phi(\hat{X}_j \check{X}^{-1}, \xi_0)) \\
        \label{eq:check_Sigma_dfn}
        \check{\Sigma}_j &= J_j^{-1} \check{\Sigma} J_j^{-\top} \\
        K_j &= \check{\Sigma}_j C_j^\top \left( C_j \check{\Sigma}_j C_j^\top + R \right)^{-1} \\
        \delta_j &= K_j (y - h(\phi(\hat{X}_j, \xi_0)) + C_j J_j^{-1} \varepsilon_j) 
        - J_j^{-1} \varepsilon_j \\
        \hat{X}_{j+1} &= \exp(\alpha_j \mmap \delta_j) \hat{X}_j \\
        \hat{\Sigma}_{j+1} &= (I_m - K_j C_j) \check{\Sigma}_j,
    \end{align}
    \end{subequations}
    where $J_j \in \R^{m\times m}$ and $C_j \in \R^{n \times m}$ are defined by
    \begin{align}
        J_j &= \diff \vartheta \cdot\diff \phi_{\hat{X}_j \check{X}^{-1}} \cdot\diff \phi^{\xi_0} \cdot \mmap, \label{eq:J_matrix} \\
        C_j &= \diff h \cdot \diff \phi_{\hat{X}_j} \cdot \diff \phi^{\xi_0} \cdot \mmap, \label{eq:C_matrix}
    \end{align}
    and $\alpha_j > 0$ is a step size that may be determined by a chosen line search method.
\end{theorem}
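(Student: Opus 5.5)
Everything is done in local coordinates on $\grpG$ centred at the current iterate. We parametrise a neighbourhood of $\hat{X}_j$ by $X = \exp(\mmap \delta)\hat{X}_j$ with $\delta \in \R^m$. The map $\mmap$ only spans a complement of $\stab_\phi(\xi_0)$, and $\tilde{S}$ depends on $X$ only through $\phi(X,\xi_0)$. Using $\Ad$-invariance of the decomposition after transport, this parametrisation therefore covers all non-degenerate directions: directions in the stabiliser leave $\tilde{S}$ unchanged and are excluded. This is how the degenerate lift directions are handled, and why the Gauss--Newton normal equations are square and invertible in $\delta$.

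The first step is to linearise the residual $\tilde{r}(\exp(\mmap\delta)\hat{X}_j)$ about $\delta=0$. For the prior block, the equivariance property gives
\[
\phi(\exp(\mmap\delta)\hat{X}_j\check{X}^{-1},\xi_0)=\phi_{\hat{X}_j\check{X}^{-1}}\bigl(\phi(\exp(\mmap\delta),\xi_0)\bigr).
\]
Differentiating in $\delta$ at $0$ by the chain rule yields $\varepsilon_j + J_j\delta + O(|\delta|^2)$, with $J_j$ exactly as in \eqref{eq:J_matrix}. For the measurement block, the analogous identity $\phi(\exp(\mmap\delta)\hat{X}_j,\xi_0)=\phi_{\hat{X}_j}(\phi(\exp(\mmap\delta),\xi_0))$ gives
\[
y-h(\phi(\hat{X}_j,\xi_0)) - C_j\delta + O(|\delta|^2),
\]
with $C_j$ as in \eqref{eq:C_matrix}. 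We should check that $J_j$ is invertible near the start. At $j=0$, $J_0=\diff\vartheta\,\diff\phi^{\xi_0}\mmap$, which is the identity because $\vartheta$ is normal: $\vartheta^{-1}(\varepsilon)=\phi(\exp(\mmap\varepsilon),\xi_0)$. Invertibility then persists by continuity, and this is the step I expect to need the most care.

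The second step is to solve the Gauss--Newton subproblem
\[
\min_\delta \tfrac12\|\varepsilon_j+J_j\delta\|^2_{\check\Sigma^{-1}}+\tfrac12\|z_j - C_j\delta\|^2_{R^{-1}},\qquad z_j := y-h(\phi(\hat{X}_j,\xi_0)).
\]
Substitute $\eta=-J_j^{-1}\varepsilon_j$ and note
\[
\|\varepsilon_j+J_j\delta\|^2_{\check\Sigma^{-1}}=\|\delta-\eta\|^2_{\check\Sigma_j^{-1}},\qquad \check\Sigma_j=J_j^{-1}\check\Sigma J_j^{-\top}.
\]
The problem then becomes a standard linear Kalman update with prior mean $\eta$, prior covariance $\check\Sigma_j$, measurement matrix $C_j$ and innovation $z_j - C_j\eta$. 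The matrix inversion lemma gives the minimiser
\[
\delta_j=\eta+K_j(z_j-C_j\eta)=K_j(z_j+C_jJ_j^{-1}\varepsilon_j)-J_j^{-1}\varepsilon_j,
\]
with $K_j$ as stated.

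The same lemma gives the inverse of the Gauss--Newton Hessian $J_j^\top\check\Sigma^{-1}J_j + C_j^\top R^{-1}C_j$ as $(I_m-K_jC_j)\check\Sigma_j=\hat\Sigma_{j+1}$. This is the reset interpretation: the covariance is expressed in the new coordinates centred at $\hat{X}_j$, and it is consistent with $\hat\Sigma_0=\check\Sigma$ since $J_0=I_m$. Finally, the iterate is retracted to the group as $\hat{X}_{j+1}=\exp(\alpha_j\mmap\delta_j)\hat{X}_j$, where the damping $\alpha_j$ comes from a line search as usual for Gauss--Newton. This completes the identification of the stated recursion with Gauss--Newton on $\tilde{S}$.
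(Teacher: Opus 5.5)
Your proposal is correct and is essentially the paper's proof. Both linearise $\tilde r$ along left perturbations $\exp(\mmap\delta)\hat X_j$ to obtain the blocks $J_j$ and $-C_j$, then solve the Gauss--Newton normal equations using the matrix inversion lemma; your completion of the square about $\eta=-J_j^{-1}\varepsilon_j$ is just a cleaner packaging of the paper's add-and-subtract of $C_j^\top R^{-1}C_jJ_j^{-1}\varepsilon_j$.

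Two minor points. First, no $\Ad$-invariance argument is needed, and the decomposition is not $\Ad$-invariant in general: the identity $\phi(\exp(\Delta)\hat X_j,\xi_0)=\phi_{\hat X_j}(\phi(\exp(\Delta),\xi_0))$ already shows the degenerate directions are exactly $\stab_\phi(\xi_0)$. Second, $J_j$ is invertible for every iterate with $\phi(\hat X_j\check X^{-1},\xi_0)$ in the chart domain, not merely by continuity near $j=0$. This is because $\diff\phi^{\xi_0}(I)\mmap=\diff\vartheta^{-1}(0)$, so $J_j$ is a composition of three isomorphisms.
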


\begin{proof}
First note that any perturbation $\Delta \hat{X}_j$ with $\Delta \in \stab_\phi(\xi_0)$ leaves the residual $\tilde{r}$ unchanged.
Therefore, it suffices to consider perturbations of the form $\Delta = \mmap \delta$, where $\delta \in \R^m$.
Fix $j \in \N$, then the residual for a perturbation $\Delta_j = \mmap \delta_j\in \mathfrak{g}$ about the current value $\hat{X}_j$ is given by
\begin{align*}
    \tilde{r}(&\exp(\mmap \delta_j)\hat{X}_j) \\ 
    &= \begin{pmatrix}
        \vartheta(\phi(\exp(\Delta_j)\hat{X}_j \check{X}^{-1}, \xi_0)) \\
        y - h(\phi(\exp(\Delta_j)\hat{X}_j,\xi_0))
    \end{pmatrix} \\
    &= \begin{pmatrix}
        \vartheta(\phi(\hat{X}_j \check{X}^{-1}, \phi(\exp(\Delta_j), \xi_0))) \\
        y - h(\phi_{\hat{X}_j}(\phi_{\xi_0}(\exp(\Delta_j))))
    \end{pmatrix} \\
    &= \begin{pmatrix} 
        \vartheta \circ \phi_{\hat{X}_j \check{X}^{-1}} \circ \phi^{\xi_0} \circ \exp(\Delta_j) \\
        y - h \circ \phi_{\hat{X}_j} \circ \phi_{\xi_0} (\exp(\Delta_j))
    \end{pmatrix} \\
    &= \tilde{r}(\hat{X}_j) +
    \begin{pmatrix} 
        \diff \vartheta \cdot\diff \phi_{\hat{X}_j \check{X}^{-1}} \cdot\diff \phi^{\xi_0} \\
        - \diff h \cdot \diff \phi_{\hat{X}_j} \cdot \diff \phi^{\xi_0}
    \end{pmatrix} \Delta_j + O(\vert \Delta_j \vert^2) \\
    &= \tilde{r}(\hat{X}_j) +
    \begin{pmatrix} 
        J_j \\
        -C_j
    \end{pmatrix} \delta_j + O(\vert \delta_j \vert^2).
\end{align*}
The Gauss-Newton update is therefore given by
\begin{align*}
    W \delta_j
    &= - \begin{pmatrix} J_j \\ -C_j \end{pmatrix}^\top \diag(\check{\Sigma}^{-1}, R^{-1}) \tilde{r}(\hat{X}_j) \\
    &= - \begin{pmatrix} J_j^\top \check{\Sigma}^{-1} &
         -C_j^\top R^{-1} \end{pmatrix} \begin{pmatrix}
        \vartheta(\phi(\hat{X}_j \check{X}^{-1}, \xi_0)) \\
        y - h(\phi(\hat{X}_j,\xi_0))
    \end{pmatrix} \\
    &= - J_j^\top \check{\Sigma}^{-1} \vartheta(\phi(\hat{X}_j \check{X}^{-1}, \xi_0)) 
        \\&\hspace{0.5cm}
        + C_j^\top R^{-1} (y - h(\phi(\hat{X}_j,\xi_0))) \\
    &= - J_j^\top \check{\Sigma}^{-1} \varepsilon_j 
        + C_j^\top R^{-1} (y - h(\phi(\hat{X}_j,\xi_0))),
\end{align*}
where $W \in \Sym_+(m)$ is the approximate Hessian along the image of $\Theta$ in $\gothg$, given by
\begin{align*}
    W 
    &:= \begin{pmatrix} J_j \\ -C_j \end{pmatrix}^\top \diag(\check{\Sigma}^{-1}, R^{-1}) \begin{pmatrix} J_j \\ -C_j \end{pmatrix} \\
    &= J_j^\top \check{\Sigma}^{-1} J_j + C_j^\top R^{-1} C_j \\
    &= \check{\Sigma}_j^{-1} + C_j^\top R^{-1} C_j ,
\end{align*}
where $\check{\Sigma}_j$ is defined in \eqref{eq:check_Sigma_dfn}. 
Using the matrix inversion lemma \cite[Corollary 2.8.8]{bernstein2009matrix},
\begin{align*}
    W^{-1} C_j^\top
    &= (\check{\Sigma}_j - \check{\Sigma}_j C_j^\top (R + C_j \check{\Sigma}_j C_j^\top)^{-1} C_j \check{\Sigma}_j)C_j^\top \\
    &= \check{\Sigma}_j C_j^\top - \check{\Sigma}_j C_j^\top (R + C_j \check{\Sigma}_j C_j^\top)^{-1} C_j \check{\Sigma}_j C_j^\top \\
    &= \check{\Sigma}_j C_j^\top + \check{\Sigma}_j C_j^\top (R + C_j \check{\Sigma}_j C_j^\top)^{-1} R
    \\&\hspace{0.5cm}
    - \check{\Sigma}_j C_j^\top (R + C_j \check{\Sigma}_j C_j^\top)^{-1} (R + C_j \check{\Sigma}_j C_j^\top) \\
    &= \check{\Sigma}_j C_j^\top (R + C_j \check{\Sigma}_j C_j^\top)^{-1} R \\
    &= K_j R.
\end{align*}
Therefore, the update $\delta_j$ is computed as
\begin{align*}
    \delta_j
    &= W^{-1} (- J_j^\top \check{\Sigma}^{-1} \varepsilon_j + C_j^\top R^{-1} (y - h(\phi(\hat{X}_j,\xi_0)))) \\
    &= - W^{-1} (\check{\Sigma}_j^{-1} J_j^{-1} \varepsilon_j - C_j^\top R^{-1} (y - h(\phi(\hat{X}_j,\xi_0)))) \\
    &= - W^{-1} \Big((\check{\Sigma}_j^{-1} + C_j^\top R^{-1} C_j) J_j^{-1} \varepsilon_j 
    \\ &\hspace{1.cm}
    - C_j^\top R^{-1} (y - h(\phi(\hat{X}_j,\xi_0)) + C_j J_j^{-1} \varepsilon_j ) \Big) \\
    &= -J_j^{-1} \varepsilon_j 
    + W^{-1} C_j^\top R^{-1} (y - h(\phi(\hat{X}_j,\xi_0)) + C_j J_j^{-1} \varepsilon_j ) \\
    &= -J_j^{-1} \varepsilon_j 
    + K_j (y - h(\phi(\hat{X}_j,\xi_0)) + C_j J_j^{-1} \varepsilon_j ),
\end{align*}
This verifies the formula for the update $\Delta_j = \mmap \delta_j$, so all that remains is to verify the updated covariance $\hat{\Sigma}_{j+1}$.
The covariance $\hat{\Sigma}_{j+1}$ is simply found as the inverse of the approximate Hessian of the cost; that is,
\begin{align*}
    \hat{\Sigma}_{j+1} &= (\check{\Sigma}_j^{-1} + C_j^\top R^{-1} C_j)^{-1}
    = (I - K_j C_j) \check{\Sigma}_j,
\end{align*}
where the last equality follows once more from the matrix inversion lemma \cite[Corollary 2.8.8]{bernstein2009matrix}.
\end{proof}

The proposed iteration addresses a cost $\tilde{S}$ that has been lifted to the Lie group from a cost $S$ that was posed on a (possibly lower-dimensional) homogeneous space, in contrast to existing works \cite{paper_Bourmaud_iteritive_LG_IKF,Lu_etal_2025,Goffin_etal_2026} that address costs directly posed on the Lie group. 
The resulting procedure appears similar to prior solutions on the surface, but special care is taken to ensure that the perturbation on the Lie group at each step is a descent direction, and that the degenerate directions in the cost function (associated with lifting) do not cause instability.
Additionally, the final covariance (inverse Hessian) $\hat{\Sigma}_j$ is deliberately computed in the local coordinates of the manifold, rather than in the Lie algebra of the group, to ensure that the final approximation of the cost function is also well-defined on the homogeneous space.

In \eqref{eq:check_Sigma_dfn}, the covariance of the prior information is explicitly modified by the nonlinear change of coordinates associated with the changing posterior $\hat{X}_j$.
This corresponds exactly to the reset step that is often applied to improve performance of geometric filters, and is likewise associated with nonlinear modification of filter error coordinates \cite{paper_Markley_JGCD_2003,paper_Mueller_etal_JGCD_2017,paper_Ge_etal_CDC_2022,Paper_Ge_van_Goor_Mahony_LCSS_2024,ge_etal_CEP_2026}.
The appearance of the reset step as a byproduct of the Gauss-Newton formulation is similarly recovered by \cite{paper_Bourmaud_iteritive_LG_IKF,paper_Liu_Chen_Zhang_CDC_2023}, although these prior works only addressed systems with Lie-group state spaces.
In contrast, the iterative IEKF by Goffin \textit{et al.} \cite{Goffin_etal_2026} deliberately omits the reset step to preserve compatibility properties of the IEKF for systems with unobservability.
The present work is the first to derive the IterEqF for systems on homogeneous spaces (not just Lie groups) and confirms the reset step as a natural consequence of the Gauss-Newton formulation of the update step.

\section{Example: Range-based Localisation}
\label{sec:example}

To demonstrate the IterEqF, we conducted simulations of a kinematic robot equipped with a Ultra-Wide Band (UWB) range sensor, providing distances of the robot to beacons at known locations.
Formally, the state of the robot is $(\theta, p) \in \SO(2) \times \R^2$, where $\theta$ denotes the angle of the robot and $p$ denotes the position of the robot, both with respect to some chosen stationary reference frame.
For simplicity, the state space $\SO(2) \times \R^2$ is identified with the Lie group $\SE(2)$, and the robot's pose is written as $P = (R, p) \in \SE(2)$, where $R \in \SO(2)$ is the rotation matrix associated with the angle $\theta$.
The kinematics of the robot are then given by
\begin{align*}
    \dot{P} = f_{(\omega, v)}(P) = P U, &&
    U = \begin{pmatrix}
        \omega^\times & v \eb_1 \\ 0_{1\times 2} & 0
    \end{pmatrix} \in \se(2),
\end{align*}
where $\omega \in \R$ is the angular velocity of the robot, and $v \in \R$ is the forward translational velocity of the robot, both expressed in the body-fixed frame, and $\eb_1 \in \R^2$ is the first basis vector.
Given a UWB beacon located at $\ell_i$ in the reference frame, the measured range from the robot to the beacon is given by
\begin{align*}
    y_i = h_i(P) &= \| p - \ell_i \|.
\end{align*}
Here, $y_i \in \R$ is the measured range and $h_i : \SE(2) \to \R$ is the measurement function corresponding to beacon $i$.

\subsection{EqF Construction}

The first step in developing an EqF for the problem is to select a transitive group action on the state space.
We choose to use $\phi : \SE(2) \times \SE(2) \to \SE(2)$ given by
\begin{align*}
    \phi(X, P) = P X.
\end{align*}
It is easily verified that this is indeed a transitive group action.
Consider the map $\Lambda : \SE(2) \times \R^2 \to \se(2)$ given by 
\begin{align*}
    \Lambda(P, (\omega, v)) := \begin{pmatrix} \omega^\times & v \eb_1 \\ 0_{1\times 2} & 0 \end{pmatrix}.
\end{align*}
Then $\Lambda$ is clearly a lift as it satisfies \eqref{eq:lift_condition}, and, in particular, it is the only lift for the system considered since $\phi$ is free as well as transitive.
Notably, $\Lambda$ is independent of its first argument, which greatly simplifies the computation of the EqF matrices $A_t$ and $B_t$.
Next, we choose the normal coordinates $\vartheta : \SE(2) \to \R^3$ to be given by
\begin{align*}
    \vartheta(P) &:= \log(P)^\vee, &
    \vartheta^{-1}(\varepsilon) &:= \exp(\varepsilon^\wedge),
\end{align*}
where the wedge map $\cdot^\wedge : \R^3 \to \se(2)$ is defined by
\begin{align*}
    \begin{pmatrix}
        \omega \\ u_1 \\ u_2
    \end{pmatrix}^\wedge
    := \begin{pmatrix}
        0 & - \omega & u_1 \\
        \omega & 0 & u_2 \\
        0 & 0 & 0
    \end{pmatrix},
\end{align*}
and the vee map $\cdot^\vee : \se(2) \to \R^3$ is its inverse.
The origin $\xi_0 = \vartheta^{-1}(0)$ is simply the identity in $\SE(2)$.

Implementation of the EqF requires computing formulas for the matrices $A_t, B_t, C_j, J^\vartheta(\varepsilon)$.
For the matrix $A_t$ given by \eqref{eq:A_matrix}, the fact that $\Lambda$ is independent of the state means that $\diff \Lambda^u \equiv 0$, and thus $A_t \equiv 0$ also.
The matrix $B_t$ is given as
\begin{align*}
    B_t = \begin{pmatrix}
        1 & 0_{1\times 2} \\
        -1^\times \hat{p} & \hat{R}
    \end{pmatrix} \begin{pmatrix}
        1 & 0 \\ 0_{2\times 1} & \eb_1
    \end{pmatrix}
    = \begin{pmatrix}
        1 & 0 \\
        -1^\times \hat{p} & \hat{R} \eb_1
    \end{pmatrix},
\end{align*}
where $\hat{R} \in \SO(2)$ and $\hat{p} \in \R^2$ are the rotation and translation components of $\hat{X}$, respectively, and 
\begin{align*}
    q^\times = \begin{pmatrix}
        0 & -q \\ q & 0
    \end{pmatrix},
\end{align*}
for all $q \in \R$.
For a given $\hat{X}_j = (\hat{R}_j, \hat{p}_j) \in \SE(2)$, the matrix $C_j$ corresponding to beacon $i$ is given by
\begin{align*}
    C_j = \frac{(\hat{p}_j - \ell_i)^\top}{\| \hat{p}_j - \ell_i \|}\begin{pmatrix} 1^\times \hat{p}_j & I_2 \end{pmatrix}.
\end{align*}
For multiple beacons, the individual $C_j$ matrices may be vertically stacked.
Finally, the matrix $J^\vartheta$ is given by
\begin{align*}
    J^\vartheta(\varepsilon)[\delta]
    &= \diff \log (\exp(\varepsilon)) [\delta \exp(\varepsilon)] \\
    &= I_3 - \frac{1}{2} \ad_\varepsilon + \beta(\varepsilon_\theta) \ad_\varepsilon^2, \\
    \beta(\theta) &:= \frac{1}{\theta^2} - \frac{1 + \cos(\theta)}{2\theta \sin(\theta)}.
\end{align*}
This is the inverse of what is referred to as the left Jacobian of $\SE(2)$ \cite{paper_Sola_eta_al_micro}.

\subsection{Simulation Results}

\begin{figure}[htb]
    \centering
    \includegraphics[width=0.5\linewidth]{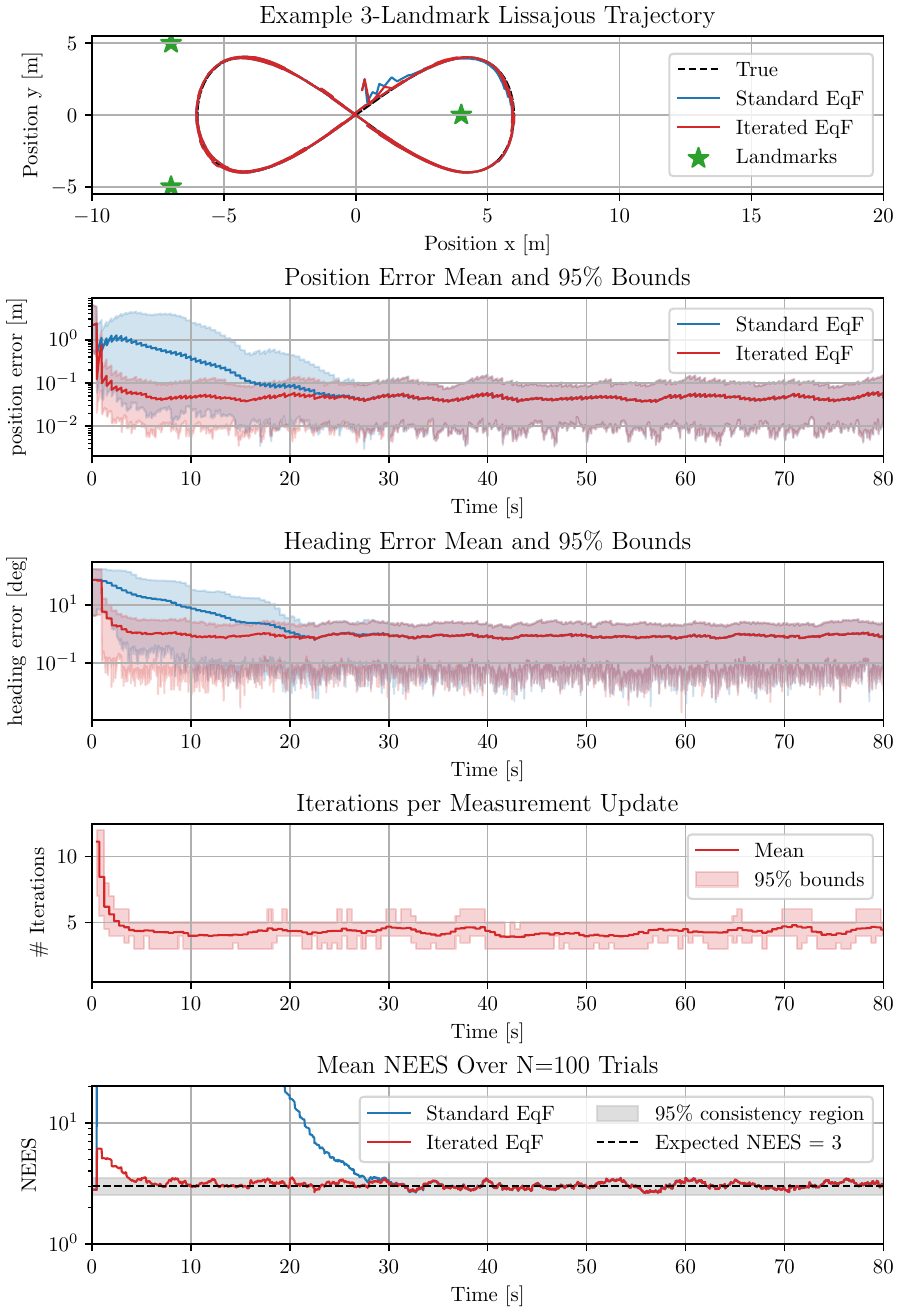}
    \caption{The results of a simulation comparing the IterEqF with a standard (single update) EqF.
    The first subplot shows the evolutions of example trajectories compared to the ground truth.
    The second and third show the evolution of position errors and heading errors over 100 trials.
    The fourth subplot shows the distribution of the number of iterations used by the IterEqF at each time.
    The last subplot shows the average NEES of the filters over time.}
    \label{fig:3landmark}
\end{figure}

To compare the IterEqF with the standard EqF, we simulated a mobile robot travelling along a figure-eight Lissajous curve with its position determined by $p(t) = (6 \sin(\tfrac{2\pi t}{40}), 4 \sin(\tfrac{4\pi t}{40}))$ and its angle $\theta(t)$ determined such that the robot's heading remains tangent to the path.
The period of the Lissajous is 40~s, and the simulation was run for 80~s using 50~Hz Euler integration for continuous-time processes.
The forward and angular velocities were corrupted with zero-mean Gaussian noise of standard deviations 0.01~m/s and 0.01~rad/s, respectively.
We simulated 3 landmarks located at 
\begin{align*}
    \ell_1 = (4,0), && \ell_2 = (-7,5), && \ell_3 = (-7,-5).
\end{align*}
The range measurements were provided at 2~Hz and were corrupted with zero-mean Gaussian noise of standard deviation 0.1~m.
The initial filter covariance was set to $\Sigma(0) = \diag((\tfrac{\pi}{2})^2, 2.0^2, 2.0^2)$, and the initial pose estimate was given by $\hat{X}(0) = \exp(\varepsilon_0^\wedge) P(0)$, where $P(0) \in \SE(2)$ is the true initial pose and $\varepsilon_0 \sim N(0, \Sigma(0))$.
The simulation was implemented in Python, and the code is publicly available\footnote{\href{https://github.com/pvangoor/iterative_equivariant_filter}{\texttt{github.com/pvangoor/iterative\_equivariant\_filter}}}.

Figure \ref{fig:3landmark} shows the result of the simulation repeated over 100 trials.
The top subplot shows example trajectories of the iterative and standard EqF compared to the ground truth.
The heading angle of the standard EqF converges slowly, leading to increasing position errors between measurements, while the IterEqF heading angle converges quickly.
The second and third subplot show that the mean position and heading angle both converge significantly more quickly for the IterEqF as compared to the standard EqF, which corresponds to the high number of iterations (shown in the fourth subplot) during the initial 5~s of the simulation.
Around 30~s, both the iterative and standard EqF have completely converged, and their accuracy is matched from this point on.
The fluctuations in the estimation errors are associated with regions where the position and heading are less easily inferred from the range measurements.
The bottom subplot shows the filters' Normalised Estimation Error Squared (NEES) $\varepsilon^\top \Sigma^{-1} \varepsilon$ over time, which demonstrates that the IterEqF is significantly more statistically consistent than the standard EqF during the transient phase.
In summary, while the IterEqF necessarily incurs a greater computational cost, it is able to converge far more rapidly than a standard EqF thanks to these iterations, which is especially relevant to problems like range-based localisation where observability may be challenging.

\section{Conclusion}
\label{sec:conclusion}

This paper presents the IterEqF.
The motivation behind deriving the IterEqF is to exploit the features of both iteration and equivariance thereby leading to a high-performance filter well suited to navigation problems.
The IterEqF correction step is found by solving a weighted nonlinear least squares problem using the Gauss-Newton algorithm.
Interestingly, the reset step naturally appears in the iterative correction, circumventing any justification for why the reset step is or is not needed. 
Monte-Carlo simulation results verify that the proposed IterEqF realizes superior convergence properties, in terms of both estimation error and covariance, relative to the standard EqF. 

\printbibliography

\end{document}

%% file: preamble.tex
\usepackage{graphicx}
\usepackage{amsmath,amssymb,amsfonts}

\usepackage{amsthm}
\usepackage{algorithm,algorithmic}
\usepackage[hidelinks=true]{hyperref}

\usepackage{xcolor}

\DeclareMathOperator{\im}{im}

\DeclareMathOperator{\Ad}{Ad}
\DeclareMathOperator{\ad}{ad}

\newcommand{\diff}[0]{{\mathrm{d}}}
\newcommand{\ddt}[0]{\tfrac{\diff}{\diff t}}

\newcommand{\T}{\mathrm{T}}

\newcommand{\calM}{\mathcal{M}}
\newcommand{\calN}{\mathcal{N}}
\newcommand{\calU}{\mathcal{U}}
\newcommand{\R}{\mathbb{R}}
\newcommand{\N}{\mathbb{N}}
\newcommand{\grpG}{\mathbf{G}}
\newcommand{\gothg}{\mathfrak{g}}

\newcommand{\stab}{\mathfrak{stab}}
\newcommand{\Sym}{\mathrm{Sym}}
\newcommand{\diag}{\mathrm{diag}}
\newcommand{\eb}{\mathbf{e}}

\newcommand{\SO}{\mathbf{SO}}
\newcommand{\SE}{\mathbf{SE}}

\newcommand{\se}{\mathfrak{se}}